\documentclass[11pt]{article}

\usepackage[margin=1in]{geometry}
\usepackage{amsmath,amssymb}
\usepackage{bm}
\usepackage[utf8]{inputenc}
\usepackage[T1]{fontenc}
\usepackage{lmodern}
\usepackage[hidelinks]{hyperref}

\newtheorem{proposition}{Proposition}

\newenvironment{proof}{\medskip\noindent\textit{Proof.}}{\hfill$\square$\medskip}

\numberwithin{equation}{section}

\begin{document}

\title{The Imaginary Component of the Weak Momentum:\\ Spectral Vanishing and Gauge Invariance}

\author{A. N. Khondker\thanks{Email: \texttt{akhondke@clarkson.edu}. ORCID: 0009-0003-1843-4480.}\\[4pt]
\small Department of Electrical and Computer Engineering,\\
\small Clarkson University, Potsdam, New York 13699, USA}

\date{}

\maketitle

\begin{abstract}
We establish a rigorous functional-analytic framework for the
imaginary component $p_I$ of the complex weak momentum
$p_Q = p_R + ip_I = \frac{\hbar}{i}\nabla\ln\psi$. For every
wavefunction in the Sobolev space $H^1(\Omega)$, including those
with nodal sets, we prove that the expectation value of $p_I$
vanishes identically, as a consequence of the regularity of
$\rho = |\psi|^2$ and without appeal to the polar form
$\psi = Re^{iS/\hbar}$. We further show that
\textnormal{(i)} the Bohm quantum potential arises entirely from
$p_I$ and satisfies
$\langle V_{\mathrm{qu}}\rangle = \langle p_I^2\rangle/2m$, from
which the known Fisher information relation follows as a corollary;
\textnormal{(ii)} the group velocity of a Bloch band is the
density-weighted average of $p_R/m$, with no contribution from $p_I$;
\textnormal{(iii)} the decomposition is gauge-invariant under
magnetic minimal coupling, the vector potential acting only on
$p_R$; and
\textnormal{(iv)} the results extend to curved Riemannian manifolds.
\end{abstract}

\noindent\textbf{Keywords:} weak momentum, spectral vanishing, gauge invariance, quantum potential, Fisher information, Sobolev regularity

\section{Introduction}
Nonlinear reformulations of quantum mechanics uncover structural features hidden within linear operator expressions~\cite{Khondker2025}. In prior work, a nonlinear momentum decomposition was introduced, where the linear operator $-i\hbar\nabla$ arises from the sum of real and imaginary functional transforms. While that work successfully demonstrated the existence of the decomposition, the functional-analytic implications of this structure were not established in that work. Specifically, the question of whether the imaginary component preserves the spectral integrity of the momentum operator requires a rigorous treatment of boundary functionals.

The geometric structure of quantum state space has been a 
focus of ongoing mathematical research, with foundational 
contributions by Kibble~\cite{Kibble}, Ashtekar and 
Schilling~\cite{Ashtekar}, and Brody and 
Hughston~\cite{Brody} establishing symplectic and Riemannian 
structures on the manifold of quantum states. The present 
work addresses a complementary question: not how the space 
of states is geometrized, but how the momentum operator 
itself factorizes according to the amplitude--phase structure 
of a single wavefunction.

Several related geometric programs inform but do not resolve the questions addressed here. Provost and Vall\'{e}e~\cite{provost1980} introduced a Riemannian metric on manifolds of quantum states; Berry~\cite{berry1984} identified the geometric phase arising from holonomy on parameter-space bundles, and the framework of information geometry~\cite{amari2000} equips probability manifolds with the Fisher metric, recovering the Fubini--Study geometry on projective Hilbert space~\cite{Brody,amari2000}. Fisher information has also been shown to play a foundational role in quantum mechanics directly: Frieden~\cite{Frieden1998} developed a unifying principle of physics based on extremizing Fisher information, and Reginatto~\cite{Reginatto1998} and Hall and Reginatto~\cite{HallReginatto2002} derived the Schr\"odinger equation --- including the quantum potential --- from a variational principle of minimum Fisher information. These works establish a connection between Fisher information and the quantum potential at the level of the equation of motion. The present work establishes a complementary and more elementary connection: the identity $\langle V_{\mathrm{qu}}\rangle = \langle p_I^2\rangle/2m$, proved directly from the operator structure of the weak momentum decomposition via Sobolev integration by parts, without any variational principle and for all $H^1$ states including those with nodal sets. The known Fisher information relation $\langle V_{\mathrm{qu}}\rangle = \hbar^2\mathcal{I}[\rho]/8m$ then follows as a corollary. Taken together, these works connect Fisher information to quantum dynamics, but none addresses how the momentum operator itself factorizes according to the internal amplitude--phase structure of a single wavefunction. The present work fills this gap.

Weak values~\cite{Aharonov1988,Leavens2005,Matzkin2015,Hiley2012,Flack2018}, on the other hand, extend quantum expectation values through pre- and post-selection~\cite{Dressel2014,Aharonov2014}. Hydrodynamic formulations, beginning with Madelung and later developed by Takabayasi, associate the phase gradient with the probability current~\cite{Madelung1927,Wyatt2005,Takabayasi}. Later work established connections between weak values and locally averaged momentum flow~\cite{Wiseman2007}. The weak momentum value is a complex number assigned to the momentum observable through the process of preselection ($|i\rangle$) and post-selection ($|f\rangle$), defined by the formula $p_{w}=\langle f|\hat{p}|i\rangle/\langle f|i\rangle$ ~\cite{Aharonov1988,Dressel2014,Aharonov2014}. 
Recent weak-measurement experiments have enabled reconstruction of local momentum--flow fields consistent with hydrodynamic formulations of quantum dynamics~\cite{Kocsis2011,Mahler2016,Zhu2021}, providing empirical access to local momentum fields related to those studied here. The imaginary part of the momentum weak value is associated with the logarithmic gradient of the wavefunction's amplitude~\cite{Leavens2005,Matzkin2015}, which is the same object as $p_I$ studied here.

The purpose of the present work is threefold. First, we establish the functional-analytic well-posedness of the decomposition, proving that the imaginary component $p_I$ reduces to a divergence-generated boundary functional under minimal regularity assumptions on $\rho = |\psi|^2$, with its expectation value vanishing identically for all physically admissible states. Second, we prove the exact identity $\langle V_{\mathrm{qu}}\rangle = \langle p_I^2\rangle/2m$ via Sobolev integration by parts --- a result that does not require the polar form and holds for all $H^1$ states --- from which the Fisher information relation $\langle V_{\mathrm{qu}}\rangle = \hbar^2\mathcal{I}[\rho]/8m$ follows as a corollary. Third, we extend the result to Riemannian manifolds and to magnetic minimal coupling, establishing that the decomposition is intrinsic and gauge-invariant, independent of any particular representation of the wavefunction.

The paper is organized as follows. Section~2 reviews the nonlinear transform. Section~3 establishes the boundary-functional identity in Sobolev spaces and extends it to curved Riemannian manifolds. Section~4 develops the quantum potential and its relation to $\langle p_I^2\rangle/2m$, including verification for the infinite square well. Section~5 discusses canonical realizations and structural implications. Section~6 presents the gauge-invariant magnetic extension.

\section{Nonlinear Momentum Transform}
The linear quantum momentum operator $-i\hbar\nabla$ is a self-adjoint operator on $L^2(\Omega)$ with a well-defined spectral structure. The constructions introduced in~\cite{Khondker2025} do not modify this operator. Instead, they provide a \emph{state-dependent nonlinear decomposition of its action} on a given wavefunction $\psi$.

Define the complex local momentum field, also known as the complex weak value of the momentum~\cite{Aharonov1988,Leavens2005,Matzkin2015,Wiseman2007}:
\begin{equation}
p_Q(\mathbf r,t) =
\frac{\hbar}{i}\nabla \ln \psi(\mathbf r,t), \label{eq:ref2.1} 
\end{equation}
and split it into real and imaginary parts~\cite{Khondker2025}:
\begin{equation}
p_Q =p_R + i p_I,
\qquad
p_R = \Re e(p_Q),
\qquad
p_I = \Im m(p_Q). \label{eq:ref2.2}
\end{equation}

We then define nonlinear maps
\begin{equation}
\mathcal P_R[\psi]=p_R \psi, \qquad
\mathcal P_I[\psi]=p_I \psi. \label{eq:ref2.3}
\end{equation}

These maps depend nonlinearly on $\psi$ through the local fields 
$p_R$ and $p_I$. Although neither transform is linear, their sum reproduces the linear momentum operator $\mathcal P_Q$,
\begin{equation}
\mathcal P_R[\psi]+i\mathcal P_I[\psi]= \mathcal P_Q[\psi] = -i\hbar\nabla\psi. \label{eq:ref2.4}
\end{equation}

Throughout this work, we carefully distinguish between two related objects. The \emph{nonlinear transforms} $\mathcal{P}_R$ and $\mathcal{P}_I$ are operator-like maps acting on $\psi$. The \emph{local fields} $p_R(\mathbf{r},t)$ and $p_I(\mathbf{r},t)$ are the real-valued vector fields that these transforms multiply $\psi$. Statements about operator structure are expressed in terms of $\mathcal{P}_R$, $\mathcal{P}_I$, while those about local dynamics, the quantum potential, and the continuity equation are expressed in terms of $p_R$, $p_I$.

Thus, the nonlinear decomposition uncovers a geometric splitting of amplitude and phase flows that is implicit but not visible in the linear-operator formulation, aligning with hydrodynamic decompositions discussed by Takabayasi~\cite{Takabayasi} and the causal interpretation by Bohm~\cite{Bohm1952}. The local field $p_I/m$ coincides with what Nelson termed the osmotic velocity in stochastic mechanics~\cite{Nelson1966}, though the present framework makes no appeal to stochastic methods.

As a verification, when $\psi$ admits the polar form, the definitions reduce to the familiar expressions. The polar form:
\begin{equation}
\psi(\mathbf r,t)=R(\mathbf r,t)e^{iS(\mathbf r,t)/\hbar}, \qquad R\ge0, \label{eq:ref2.5}
\end{equation}
gives explicitly:
\begin{equation}
p_R = \nabla S, \qquad
p_I = -\hbar\nabla\ln R, \label{eq:ref2.6}
\end{equation}
identifying $p_R$ with the de~Broglie-Bohm momentum and $p_I$ 
with the logarithmic gradient of the probability amplitude.
The paper does not assume this form; it is shown here only 
to connect the abstract definitions to familiar notation~\cite{Bohm1952}.
The polar form is only used in the Appendix to verify equivalence 
with Bohm's original quantum potential expression.

In the \hyperref[sec:AppendixA]{Appendix}, we provide a full derivation of a Hamilton-Jacobi equation and the continuity equation, using only the complex local momentum field $p_R + ip_I$ defined in Eqs.~\eqref{eq:ref2.2}. The resulting formulation is structurally consistent with hydrodynamic, de Broglie-Bohm, weak-value constructions, yet does not introduce new observables and remains entirely within standard quantum mechanics. It is important to mention that while Ref.~\cite{Khondker2025} introduced the existence of the decomposition, this work establishes the functional-analytic well-posedness of the decomposition, as detailed in Section~3.

In the previous work~\cite{Khondker2025}, it was shown that 
$p_R$ is the local momentum field. A critical challenge in 
nonlinear decompositions is the behavior of the imaginary 
local field $p_I$ at nodal sets where $\psi = 0$. A naive 
pointwise evaluation of $\langle p_I \rangle$ is ill-defined 
at such points because $p_I \sim -\hbar\nabla\ln|\psi| \to 
-\infty$ as $|\psi| \to 0$. The following section examines 
a question of mathematical consistency: whether the imaginary 
component affects the expectation value of the Hermitian 
momentum operator under standard boundary conditions.

\paragraph{Interpretive note.}
The nonlinear transforms introduced here do not define linear operators on Hilbert space. Rather, they provide a state-dependent decomposition of the action of the linear momentum operator when evaluated on a given wavefunction. This distinction is used throughout Sections~3 and~4 to interpret the geometric structure of the decomposition without modifying the underlying operator framework.

\section{Structural Consistency and Functional-Analytic Framework} 

The vanishing of the imaginary weak-momentum expectation is a rigorous consequence of the underlying functional geometry. We establish this by placing the decomposition in the Sobolev regularity framework, ensuring that the splitting remains well-defined even in the presence of nodal sets. 

\subsection{Distributional Formulation and Nodal Regularity}

The key observation is that the product $\rho\, p_I$ remains 
locally integrable even at nodal sets. Using the density 
identity, Eq.~\eqref{eq:refA.density-id},  one obtains:
\begin{equation}
    \rho\, p_I \;=\; -\frac{\hbar}{2}\nabla\rho.
    \label{eq:ref3.1}
\end{equation}
The right-hand side is locally $L^1$ whenever 
$\rho \in W^{1,1}(\Omega)$, which holds for all 
$\psi \in H^1(\Omega)$ since $\rho = |\psi|^2$ inherits 
the Sobolev regularity of $\psi$ via the Cauchy--Schwarz 
inequality:
\begin{equation*}
    \int_\Omega |\nabla\rho|\,\mathrm{d}r
    \;\leq\; 2\,\big\||\psi|\big\|_{L^2(\Omega)}
             \big\|\nabla|\psi|\big\|_{L^2(\Omega)}
    \;<\; \infty.
\end{equation*}
This step would \emph{fail} if $\psi \notin H^1(\Omega)$, 
confirming that the $H^1$ assumption is sharp. The 
expectation integral $I_I$ therefore becomes:
\begin{equation}
    I_I \;=\; \int_\Omega \rho\, p_I\,\mathrm{d}r
        \;=\; -\frac{\hbar}{2}\int_\Omega 
              \nabla\rho\,\mathrm{d}r,
    \label{eq:ref3.2}
\end{equation}
where $\nabla\rho$ is interpreted as a weak gradient, 
valid for all $\rho \in W^{1,1}(\Omega)$, without 
requiring classical differentiability at nodal sets. 
Applying the Gauss--Green 
theorem~\cite{ChenTorresZiemer2009} to 
Eq.~\eqref{eq:ref3.2}:
\begin{equation}
    I_I \;=\; -\frac{\hbar}{2}
              \oint_{\partial\Omega}\rho\,\hat{n}\,\mathrm{d}S.
    \label{eq:ref3.3}
\end{equation}

This is the central structural identity of the paper. Three points deserve emphasis. First, Eq.~\eqref{eq:ref3.3} shows that $\langle p_I \rangle$ is determined entirely by the \emph{boundary data} of $\rho = |\psi|^2$, not by its interior structure; the imaginary momentum carries no bulk dynamical information. Second, for all physically admissible states --- those with decay at infinity, hard-wall Dirichlet conditions, or lattice periodicity --- the right-hand side vanishes identically (as verified in Section~5), establishing $\langle p_I \rangle = 0$ as a universal structural consequence of boundary topology. Third, the result is \emph{representation-independent}: the Gauss--Green theorem holds on any $W^{1,1}$ manifold with Lipschitz boundary~\cite{ChenTorresZiemer2009}, so the identity extends immediately to the Riemannian setting of Section~3.2.

The validity of Eq.~\eqref{eq:ref3.3} does not rely on classical differentiability at the nodes. The entire argument proceeds in the Sobolev sense: $\nabla\rho$ is interpreted as a weak gradient, and the Gauss--Green theorem is applied in its generalized form for $W^{1,1}$ vector fields~\cite{ChenTorresZiemer2009}.

\subsection{Riemannian Generalization}
\label{Theorem2}

The boundary-functional identity of Section~3.1 extends without 
modification to curved Riemannian manifolds, establishing that the 
vanishing of $\langle p_I\rangle$ is an intrinsic property of the 
Sobolev geometry of $\psi$, independent of any particular coordinate 
system or embedding in flat space.

Let $(M,g)$ be a smooth oriented Riemannian manifold with volume 
form $d\mu_g$ and Levi--Civita gradient $\nabla_g$. For 
$\psi \in H^1(M)$ with $\rho = |\psi|^2 \in W^{1,1}(M)$, 
the nonlinear decomposition extends naturally: the imaginary 
local field in the Riemannian setting is defined as 
$p_I^{(g)} = \mathrm{Re}(-\hbar\nabla_g\ln\psi)$, and the 
action of the kinetic momentum operator on $\psi$ decomposes as:
\begin{equation}
    -i\hbar\nabla_g\psi 
    \;=\; p_R^{(g)}\psi + i\,p_I^{(g)}\psi,
    \label{eq:Riem-decomp}
\end{equation}
where $p_R^{(g)} = \mathrm{Re}(\frac{\hbar}{i}\nabla_g\ln\psi)$ 
is the real local field in the Riemannian setting. The density 
identity of \hyperref[sec:AppendixA]{the Appendix} generalizes 
directly, since it follows from $\rho = |\psi|^2$ alone:
\begin{equation}
    \rho\, p_I^{(g)} = -\frac{\hbar}{2}\nabla_g\rho, 
    \qquad \rho = |\psi|^2,
    \label{eq:Riem-density}
\end{equation}
and integration over $M$ yields the Riemannian boundary functional:
\begin{equation}
    \int_M \rho\, p_I^{(g)}\,d\mu_g 
    = -\frac{\hbar}{2}\int_{\partial M} 
      \rho\,\hat{n}\,d\sigma_g,
    \label{eq:Riem-boundary}
\end{equation}
where $\hat{n}$ is the outward unit normal and $d\sigma_g$ is the 
induced boundary measure. The proof follows that of Section~3.1 
with the Euclidean gradient $\nabla$ replaced throughout by 
$\nabla_g$ and $\rho = |\psi|^2$ replacing $R^2$, with the 
Gauss--Green theorem applied in its Riemannian 
form~\cite{ChenTorresZiemer2009}. For all physically admissible 
states the boundary term on the right vanishes by the same 
topological argument as Section~3.1, establishing 
$\langle p_I^{(g)}\rangle = 0$ as a universal consequence of 
$W^{1,1}$ Sobolev regularity on any manifold with Lipschitz 
boundary.

\section{The Quantum Potential: Geometric Origin, Global Validity, and Fisher Information}

The nonlinear decomposition introduced in Section~2 carries a precise 
geometric meaning. The real local field $p_R$ governs 
dynamical transport --- it is the phase gradient responsible for the 
probability current and the Bohmian velocity. The imaginary local 
field $p_I = -\frac{\hbar}{2}\nabla\ln\rho$ governs amplitude deformation --- it 
is the logarithmic gradient of the probability density, responsible 
for the quantum potential and the osmotic flow. These two fields are 
structurally independent: the imaginary field $p_I$ carries no 
dynamical transport information, as its density-weighted average 
$\int_\Omega \rho\, p_I\, d\mathbf{r}$ vanishes for all states 
satisfying the boundary conditions of Section~3, while the real 
field $p_R$ carries the full physical momentum 
$\langle -i\hbar\nabla\rangle$. This geometric division of labor 
--- transport versus deformation, phase versus amplitude --- is the 
structural content of the decomposition, and the results of this section make it precise and rigorous.

\subsection{Quantum Potential}

Following the derivation in the Appendix, we apply the nonlinear transforms to evaluate the action of the 
Hamiltonian $\hat{H}$ on $\psi$, introducing mathematical operators 
$e_R(\mathbf{r},t)$ and $e_I(\mathbf{r},t)$ defined as 
the real and imaginary parts of $i\hbar(\partial_t\ln\psi)$. Their 
sum satisfies $e_R\psi + ie_I\psi = \hat{H}\psi$  (see ~\eqref{eq:refA.3}), and the real part 
yields:
\begin{equation}
    e_R \;=\; \frac{p_R^2}{2m} + V + V_{\mathrm{qu}},
    \label{eq:ref4.1}
\end{equation}
where $V_{\mathrm{qu}}$ is the structural quantum potential expressed solely
in terms of the imaginary local field $p_I$:
\begin{equation}
    V_{\mathrm{qu}} \;=\; \frac{\hbar\,\nabla\cdot p_I \;-\; p_I^2}{2m}.
    \label{eq:ref4.2}
\end{equation}
The structural form Eq.~\eqref{eq:ref4.2} is an exact algebraic 
identity with Bohm's original quantum potential 
$-\hbar^2\nabla^2 R/(2mR)$, as shown in 
\hyperref[sec:AppendixA]{the Appendix}.
\subsubsection{Global Validity and Distinction from the Pointwise 
de~Broglie--Bohm Equation}
\label{sec:global-validity}

Bohm's equation in its pointwise form~\cite{Bohm1952} reads:
\begin{equation}
    -\frac{\partial S}{\partial t} = \frac{(\nabla S)^2}{2m}
    + V(\mathbf{r}) - \frac{\hbar^2}{2m}\frac{\nabla^2 R}{R},
    \label{eq:Bohm-pointwise}
\end{equation}
which requires $R > 0$ at every point. At nodal sets 
$\{\psi = 0\}$ where $R = 0$, the term $\nabla^2 R/R$ diverges 
and Eq.~\eqref{eq:Bohm-pointwise} is not well-defined. The 
present formulation resolves this: $V_{\mathrm{qu}}$ enters only 
through its expectation value, which is finite and non-negative 
for all $\psi \in H^1(\Omega)$, as shown below.

Taking the expectation value 
of Eq.~\eqref{eq:ref4.2}:
\begin{equation}
    \langle V_{\mathrm{qu}}\rangle 
    = \frac{\hbar}{2m}\int_\Omega\rho\,\nabla\cdot p_I\,d\mathbf{r}
    - \frac{\langle p_I^2\rangle}{2m}.
    \label{eq:Vqu-split}
\end{equation}
Applying integration by parts to the first term, with boundary  
terms vanishing for all $\psi\in H^1(\Omega)$ by Section~3:
\begin{equation}
    \frac{\hbar}{2m}\int_\Omega\rho\,\nabla\cdot p_I\,d\mathbf{r}
    = -\frac{\hbar}{2m}\int_\Omega p_I\cdot\nabla\rho\,d\mathbf{r}. 
\end{equation}
Substituting the density identity 
$\nabla\rho = -\frac{2}{\hbar}\rho\,p_I$
from \hyperref[sec:AppendixA]{the Appendix}, which follows from 
Eqs.~\eqref{eq:ref2.1}--\eqref{eq:ref2.2} alone:
\begin{equation}
    -\frac{\hbar}{2m}\int_\Omega p_I\cdot\nabla\rho\,d\mathbf{r}
    = \frac{1}{m}\int_\Omega\rho\,p_I^2\,d\mathbf{r}
    = \frac{\langle p_I^2\rangle}{m}. \label{eq:ref4.6}
\end{equation}
As an immediate consequence of combining 
Eqs.~\eqref{eq:Vqu-split} and \eqref{eq:Vqu-pI}, the divergence 
and squared-magnitude of the imaginary field satisfy:
\begin{equation}
    \hbar\langle\nabla\cdot p_I\rangle = 2\langle p_I^2\rangle,
    \label{eq:virial-osmotic}
\end{equation}
confirming the internal consistency of the structural and global 
forms.
Substituting Eq.~\eqref{eq:ref4.6} into Eq.~\eqref{eq:Vqu-split}:
\begin{equation}
    \langle V_{\mathrm{qu}}\rangle 
    = \frac{\langle p_I^2\rangle}{m} 
    - \frac{\langle p_I^2\rangle}{2m}
    = \frac{\langle p_I^2\rangle}{2m}.
    \label{eq:Vqu-pI}
\end{equation}
We have derived Eq.~\eqref{eq:Vqu-pI} 
directly from the structural form Eq.~\eqref{eq:ref4.2}, using 
only the definition of $p_I$ and integration by parts --- no 
polar form of $\psi$ is required. 
This is structurally identical to the classical kinetic energy 
formula $\langle T\rangle = \langle p_R^2\rangle/2m$, with $p_I$ 
playing the role of the imaginary momentum. The result is finite 
and non-negative for all $\psi\in H^1(\Omega)$, resolving the 
nodal singularity difficulty of Eq.~\eqref{eq:Bohm-pointwise}. 

The Fisher information connection follows from the density 
identity $p_I = -\frac{\hbar}{2}\nabla\ln\rho$, which gives
$\langle p_I^2\rangle = \frac{\hbar^2}{4}\mathcal{I}[\rho]$,
and hence:
\begin{equation}
    \langle V_{\mathrm{qu}}\rangle
    = \frac{\hbar^2}{8m}\,\mathcal{I}[\rho].
    \label{eq:Vqu-Fisher-info}
\end{equation}
This recovers the known Fisher information 
identity~\cite{Frieden1998,Reginatto1998} through the nonlinear 
decomposition framework. The three equivalent expressions:
\begin{equation}
    \langle V_{\mathrm{qu}}\rangle
    = \frac{\langle p_I^2\rangle}{2m}
    = \frac{\hbar^2}{8m}\,\mathcal{I}[\rho]
    = \frac{\hbar^2}{2m}\int_\Omega |\nabla R|^2\,d\mathbf{r}
    \label{eq:Vqu-triple}
\end{equation}
connect the quantum potential to $\langle p_I^2\rangle/2m$, the 
Fisher information of $\rho$, and the Sobolev norm of $R$. The 
first is the primary result of this section; the third follows 
from $\langle p_I^2\rangle = \hbar^2\int|\nabla R|^2\,d\mathbf{r}$ 
via the polar form and is included for comparison with the 
existing literature~\cite{Heifetz2015}.

\subsubsection{Canonical Example: Infinite Square Well}
\label{sec:ISW}

The result of Eq.~\eqref{eq:Vqu-Fisher-info} can be verified
explicitly for the infinite square well, which provides the most
transparent illustration because the particle velocity vanishes
identically ($p_R = 0$) and the potential is zero inside the well
($V = 0$). The total energy must therefore come entirely from the
quantum potential.

For a particle in an infinite well of width $L$, the normalized
eigenstates are:
\begin{equation}
    \psi_n(x) = \sqrt{\frac{2}{L}}\sin\!\left(\frac{n\pi x}{L}\right),
    \qquad n = 1, 2, 3, \ldots
    \label{eq:ISW-psi}
\end{equation}
Since $\psi_n$ is real, $S = 0$ everywhere and $p_R = \nabla S = 0$,
confirming zero particle velocity. The amplitude is $R(x) =
|\psi_n(x)|$, giving:
\begin{equation}
    |\nabla R|^2 = \frac{2}{L}\cdot\frac{n^2\pi^2}{L^2}
    \cos^2\!\left(\frac{n\pi x}{L}\right).
    \label{eq:ISW-gradR}
\end{equation}
Substituting into the third expression of Eq.~\eqref{eq:Vqu-triple} and using
$\int_0^L\cos^2(n\pi x/L)\,dx = L/2$:
\begin{equation}
    \langle V_{\mathrm{qu}}\rangle_n
    = \frac{\hbar^2}{2m}\cdot\frac{2n^2\pi^2}{L^3}
    \cdot\frac{L}{2}
    = \frac{n^2\hbar^2\pi^2}{2mL^2} = E_n.
    \label{eq:ISW-result}
\end{equation}
The quantum potential accounts for the \emph{entire} energy of every
eigenstate. Since $\langle p_R^2\rangle/2m = 0$ and $\langle V
\rangle = 0$, the energy equation reduces to:
\begin{equation}
    \langle E \rangle = \langle V_{\mathrm{qu}} \rangle = E_n,
    \label{eq:ISW-energy}
\end{equation}
in exact agreement with the standard quantum mechanical result.

This result has a transparent physical interpretation. The energy
$E_n$ is not kinetic in the classical sense --- the Bohmian particle
is at rest. Rather, from Eq.~\eqref{eq:Vqu-pI}, it equals 
$\langle p_I^2\rangle/2m$. The Fisher information of
the $n$-th eigenstate is:
\begin{equation}
    \mathcal{I}[\rho_n] = 4\int_0^L|\nabla R_n|^2\,dx
    = \frac{4n^2\pi^2}{L^2},
    \label{eq:ISW-Fisher}
\end{equation}
confirming $\langle V_{\mathrm{qu}}\rangle_n =
\hbar^2\mathcal{I}[\rho_n]/8m = E_n$ via
Eq.~\eqref{eq:Vqu-Fisher-info}. Higher eigenstates oscillate more
rapidly in space, have larger Fisher information, and consequently
higher energy. The energy quantization of the infinite square well
has a direct information-geometric expression in terms of the Fisher
information of the probability density --- a perspective that is
entirely absent from the pointwise de~Broglie--Bohm formulation.

\subsubsection{Physical and Information-Geometric Interpretation}
\label{sec:Fisher}

The imaginary local field $p_I = -\frac{\hbar}{2}\nabla\ln\rho$ carries a 
precise dual role. In information geometry, it is directly 
proportional to the score function $\nabla\ln\rho$ of the 
probability density $\rho$, whose squared expectation defines 
the classical Fisher information --- linking the decomposition 
to the broader framework of geometric quantum 
mechanics~\cite{Kibble,Ashtekar,Brody}. The connection between 
the quantum potential and Fisher information has been noted 
previously: Reginatto~\cite{Reginatto1998} and Hall and 
Reginatto~\cite{HallReginatto2002} derived the Schr\"odinger 
equation from a variational principle of minimum Fisher 
information, with the quantum potential arising as an 
informational correction at the level of the equation of 
motion; and Heifetz and 
Cohen~\cite{Heifetz2015} observed within the Madelung 
hydrodynamic framework that the integrand of 
$\langle V_{\mathrm{qu}}\rangle$ equals half the squared 
imaginary momentum density. The present work establishes the 
intermediate identity $\langle V_{\mathrm{qu}}\rangle = 
\langle p_I^2\rangle/2m$ as an exact operator-level result, 
proved via Sobolev integration by parts from the decomposition 
of $-i\hbar\nabla$, without any variational principle and for 
all $H^1$ states including those with nodal sets. The known 
Fisher information relation 
$\langle V_{\mathrm{qu}}\rangle = \hbar^2\mathcal{I}[\rho]/8m$ 
then follows as a corollary. In physical terms, 
$p_I$ is the momentum of amplitude deformation: it governs 
the direction along which the wave packet changes shape and 
generates the quantum potential, while $p_R$ governs 
transport --- the direction along which the particle moves and 
drives the probability current.

The quantum potential $V_{\mathrm{qu}}$ from Eq.~\eqref{eq:ref4.2}
is therefore the energy associated with the \emph{curvature of 
the amplitude}. Even if a particle is instantaneously at rest 
(local Bohmian velocity $p_R/m = 0$), it still possesses 
internal energy whenever the amplitude flow has non-zero 
divergence ($\nabla\cdot p_I \neq 0$). This is the geometric 
meaning of the quantum force: it is not an external push, but 
a structural requirement of the curvature of the wavefunction 
amplitude. The structural independence between the 
information-theoretic content of the amplitude (governed by 
$p_I$) and the dynamical transport (governed by $p_R$) parallels 
the decomposition identified by Brody and Hughston~\cite{Brody} 
between the Riemannian and symplectic structures on projective 
Hilbert space. Establishing the precise correspondence between 
this position-space factorization and the abstract quantum 
geometric tensor~\cite{provost1980,berry1984} is a natural 
direction for future work.

\subsubsection{Mathematical Origin of the Quantum Force}

The quantum force $F_{\mathrm{qu}} = -\nabla V_{\mathrm{qu}}$ appears
non-classical because the potential $V_{\mathrm{qu}}$ itself depends on
the local field geometry through $p_I$, rather than being an externally
specified function of position.  In classical mechanics, a force is the
gradient of an external potential. Here, the potential is a functional
of the wavefunction amplitude, making the force inherently non-local in 
the sense that it depends on $p_I$ and its spatial variation at each point.

Writing $V_{\mathrm{qu}}$ in the structural form of 
Eq.~\eqref{eq:ref4.2} makes this origin transparent. The 
local field $p_I/m$ plays the role of a velocity field 
whose divergence governs the compressibility of the 
amplitude flow --- the rate at which the field 
converges or diverges at each point --- while the $p_I^2$ 
term measures its local kinetic intensity. The quantum force is 
the gradient of the net imbalance between these two contributions, 
consistent with the analysis in Ref.~\cite{Khondker2025}.

\section{Canonical Realizations and Boundary Robustness}

The boundary functional derived in Eq.~\eqref{eq:ref3.3} reveals that the vanishing of $\langle p_I \rangle$ is a consequence of the system's boundary topology. We now verify this robustness across several standard quantum systems, demonstrating that the vanishing of the imaginary expectation value is a universal consequence of boundary topology.

\subsection{Localized and Bound States}
For standard systems such as the infinite square well, harmonic oscillator, or particles in Airy-type potentials, the imaginary contribution vanishes due to the behavior of the probability density at the boundaries:
\begin{itemize}
    \item \textbf{Hard Boundaries:} In an infinite well of width $L$, the Dirichlet conditions $\psi(0) = \psi(L) = 0$ ensure that $\rho(0) = \rho(L) = 0$. Consequently, the surface integral $\oint \rho\, \hat{\mathbf{n}}\, dS$ vanishes identically.
    \item \textbf{Decay at Infinity:} For normalizable states in $\mathbb{R}^d$, the requirement $\psi \in L^2$ forces $\rho \to 0$ as $|r| \to \infty$, eliminating the surface term in the limit of an infinite bounding sphere.
\end{itemize}

\subsection{Structural Vanishing in Periodic Systems}

The vanishing of the imaginary component persists under periodic
boundary conditions as a direct consequence of the boundary-functional
identity of Section~3 combined with the lattice periodicity of the
probability density.

Consider a Bloch eigenstate $\psi_{n\mathbf{k}}(\mathbf{r}) =
e^{i\mathbf{k}\cdot\mathbf{r}}u_{n\mathbf{k}}(\mathbf{r})$ on a
primitive cell $\Omega_c$ with Bravais lattice vectors
$\{\mathbf{a}_1, \mathbf{a}_2, \mathbf{a}_3\}$~\cite{AshcroftMermin1976}.
The cell density $\rho(\mathbf{r}) = |u_{n\mathbf{k}}(\mathbf{r})|^2$
satisfies lattice periodicity:
\begin{equation}
    \rho(\mathbf{r} + \mathbf{a}_j) = \rho(\mathbf{r}),
    \quad j = 1,2,3.
    \label{eq:ref5.1}
\end{equation}

\begin{proposition}[Vanishing of the osmotic term]
Let $\rho \in W^{1,1}(\Omega_c)$ satisfy~\eqref{eq:ref5.1}.
Then $\displaystyle\int_{\Omega_c} \rho\, p_I\, d\mathbf{r} = 0$.
\end{proposition}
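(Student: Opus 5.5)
The plan is to reduce the statement to the boundary-functional identity of Section~3 and then show that the boundary term cancels face by face under the lattice periodicity~\eqref{eq:ref5.1}.

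\emph{Step 1: pass to a weak gradient.} By the density identity~\eqref{eq:ref3.1}, $\rho\,p_I = -\frac{\hbar}{2}\nabla\rho$ holds as an identity of $L^1$ functions. This is valid even on nodal sets, since $\rho\in W^{1,1}(\Omega_c)$ and $\nabla\rho$ is understood as a weak gradient. Hence
\begin{equation*}
\int_{\Omega_c}\rho\,p_I\,d\mathbf r = -\frac{\hbar}{2}\int_{\Omega_c}\nabla\rho\,d\mathbf r .
\end{equation*}

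\emph{Step 2: reduce to a boundary integral.} The primitive cell $\Omega_c$ is a parallelepiped, so it has Lipschitz boundary. The generalized Gauss--Green theorem used to obtain~\eqref{eq:ref3.3} therefore applies, with $\rho|_{\partial\Omega_c}$ understood as the $W^{1,1}$ trace. This gives $-\frac{\hbar}{2}\oint_{\partial\Omega_c}\rho\,\hat n\,dS$.

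\emph{Step 3: cancel opposite faces.} The boundary $\partial\Omega_c$ splits into three pairs of opposite faces $F_j^-$ and $F_j^+ = F_j^- + \mathbf a_j$. The outward normals on the two faces of a pair satisfy $\hat n_j^+ = -\hat n_j^-$, and translation by $\mathbf a_j$ preserves surface measure. By~\eqref{eq:ref5.1}, the traces satisfy $\rho(\mathbf r+\mathbf a_j)=\rho(\mathbf r)$ for a.e.\ $\mathbf r\in F_j^-$. The two face integrals in each pair are then equal in magnitude and opposite in sign, so the whole boundary functional vanishes.

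The main obstacle is justifying the trace equality in Step 3. Periodicity is stated for $\rho$ as a function on $\mathbb R^3$, while the Gauss--Green theorem works with traces from inside $\Omega_c$, and the two sides of $F_j^+$ see $\rho$ from neighboring cells. I will handle this by noting that $\rho$ is the restriction of a periodic function lying in $W^{1,1}_{\mathrm{loc}}(\mathbb R^3)$, which holds because $u_{n\mathbf k}\in H^1_{\mathrm{loc}}$ is periodic. For such a function the traces from both sides of any interior hyperplane agree. The trace on $F_j^+$ taken from inside $\Omega_c$ therefore equals the trace taken from $\Omega_c+\mathbf a_j$, which by translation invariance is the trace on $F_j^-$.

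Alternatively, I could avoid traces altogether. In that approach I mollify $\rho$ by a convolution $\rho_\varepsilon=\rho*\eta_\varepsilon$, which is smooth and still periodic. The cancellation in Step 3 is classical for $\rho_\varepsilon$, and I would then pass to the limit using $\nabla\rho_\varepsilon\to\nabla\rho$ in $L^1(\Omega_c)$.
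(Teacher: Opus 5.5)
Your proof is correct and follows the paper's argument: the density identity $\rho\,p_I=-\frac{\hbar}{2}\nabla\rho$, then the $W^{1,1}$ Gauss--Green theorem on the cell, then cancellation of opposite-face fluxes by periodicity. Your extra justification that the interior traces on $F_j^{\pm}$ coincide, using that $\rho=|u_{n\mathbf k}|^2$ is a periodic function in $W^{1,1}_{\mathrm{loc}}(\mathbb R^3)$, fills in a step the paper only asserts, and it is genuinely needed: pointwise periodicity of a function that is merely in $W^{1,1}(\Omega_c)$ still allows jumps across cell faces (e.g.\ $\rho=(1+x)^2$ on $(0,1)$ extended with period $1$ gives $\int_0^1\rho'\,dx=3\neq 0$).
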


\begin{proof}
By the density identity~\eqref{eq:ref3.1} and the Gauss--Green
theorem~\cite{ChenTorresZiemer2009}:
\[
    \int_{\Omega_c} \rho\, p_I\, d\mathbf{r}
    \;=\; -\frac{\hbar}{2}\oint_{\partial\Omega_c} \rho\,\hat{n}\,dS.
\]
The boundary $\partial\Omega_c$ consists of opposing face pairs
$\{F_j^+, F_j^-\}$ with $\hat{n}|_{F_j^+} = -\hat{n}|_{F_j^-}$.
Periodicity~\eqref{eq:ref5.1} gives $\rho|_{F_j^+} = \rho|_{F_j^-}$ for each
$j$, so every pair contributes zero flux. The boundary integral
vanishes identically.
\end{proof}

This result is consistent with the general Bloch theorem for current
operators~\cite{watanabe2019}, which establishes that the expectation
value of any $U(1)$ current averaged over a periodic system vanishes
in the thermodynamic limit. The real component $\frac{1}{m}
\int_{\Omega_c}\rho\, p_R\, d\mathbf{r}$ recovers the standard
semiclassical group velocity, including its Berry-phase
corrections~\cite{xiao2010}, while the imaginary component
contributes nothing to band transport --- confirming consistency with the standard theory of Bloch electrons.

\subsection{Periodic Systems and Group Velocity}

The role of the real weak-momentum component becomes particularly
transparent in periodic systems. The purpose of this section is to
demonstrate that the group velocity of a Bloch band is carried
entirely by the real component $p_R$, while the imaginary component $p_I$ contributes nothing --- a consistency check
with the standard band theory that also illustrates the physical
division of labor between the two local fields $p_R$ and $p_I$.

Consider a Bloch eigenstate $\psi_{n\mathbf{k}} =
e^{i\mathbf{k}\cdot\mathbf{r}}u_{n\mathbf{k}}(\mathbf{r})$ with
lattice-periodic $u_{n\mathbf{k}}$~\cite{AshcroftMermin1976}. Here $n$ is the band index and $\mathbf{k}$ is the crystal momentum.
Conjugating the Hamiltonian with $e^{i\mathbf{k}\cdot\mathbf{r}}$ 
yields the shifted operator $H_\mathbf{k} = (\hat{p} + 
\hbar\mathbf{k})^2/2m + V(\mathbf{r})$. The Hellmann--Feynman
theorem~\cite{Feynman1939,Wannier1962,Blount1962} then gives the
standard group velocity:
\begin{equation}
    \nabla_\mathbf{k} E_n(\mathbf{k})
    \;=\; \frac{\hbar}{m}
          \langle u_{n\mathbf{k}} \vert\, \hat{p} + \hbar\mathbf{k}
          \,\vert u_{n\mathbf{k}} \rangle.
    \label{eq:ref5.2}
\end{equation}

To connect this result with the weak-momentum decomposition, note that
from the nonlinear decomposition $\nabla\psi_{n\mathbf{k}} = 
\frac{i}{\hbar}(p_{R,n\mathbf{k}} + ip_{I,n\mathbf{k}})\psi_{n\mathbf{k}}$,
we compute $\psi_{n\mathbf{k}}^*\nabla\psi_{n\mathbf{k}} = 
\frac{i}{\hbar}\rho_{n\mathbf{k}}(p_{R,n\mathbf{k}} + ip_{I,n\mathbf{k}})$.
Since $p_{R,n\mathbf{k}}$ and $p_{I,n\mathbf{k}}$ are real, taking the
imaginary part gives 
$\mathrm{Im}(\psi_{n\mathbf{k}}^*\nabla\psi_{n\mathbf{k}}) = 
\rho_{n\mathbf{k}}\,p_{R,n\mathbf{k}}/\hbar$. The probability current 
density is therefore~\cite{Khondker2025}:
\begin{equation}
\bm{J}_{n\mathbf{k}}(\mathbf{r})
=\frac{\hbar}{m}\,\mathrm{Im}\!\left(\psi_{n\mathbf{k}}^* 
\nabla\psi_{n\mathbf{k}}\right) 
=\frac{\rho_{n\mathbf{k}}(\mathbf{r})\,p_{R,n\mathbf{k}}(\mathbf{r})}{m},
\label{eq:ref5.3} 
\end{equation}
where $p_{R,n\mathbf{k}}(\mathbf{r})/m$ is the local Bohmian velocity 
of the Bloch state. The imaginary field $p_{I,n\mathbf{k}}$ contributes 
only to the real part of $\psi_{n\mathbf{k}}^*\nabla\psi_{n\mathbf{k}}$ 
and therefore drops out of the current entirely. Since $\nabla\psi_{n\mathbf{k}} = 
e^{i\mathbf{k}\cdot\mathbf{r}}(\nabla + i\mathbf{k})
u_{n\mathbf{k}}$, we have $\psi_{n\mathbf{k}}^* \nabla\psi_{n\mathbf{k}} = 
u_{n\mathbf{k}}^*(\nabla + i\mathbf{k})u_{n\mathbf{k}}$.
Integrating the current over the primitive cell $\Omega$:
\begin{align}
    \int_\Omega \bm{J}_{n\mathbf{k}}\,d\mathbf{r}
    &= \frac{\hbar}{m}\int_\Omega
       \mathrm{Im}\!\left[u_{n\mathbf{k}}^*
       (\nabla + i\mathbf{k})u_{n\mathbf{k}}\right]
       d\mathbf{r} \notag\\
    &= \frac{1}{m}
       \langle u_{n\mathbf{k}}\vert\,
       \hat{p} + \hbar\mathbf{k}\,
       \vert u_{n\mathbf{k}}\rangle
    \;=\; \frac{1}{\hbar}
       \nabla_\mathbf{k}E_n(\mathbf{k}),
    \label{eq:ref5.4}
\end{align}
where the last equality uses Eq.~\eqref{eq:ref5.2}. Substituting Eq.~\eqref{eq:ref5.3} into the left side of 
Eq.~\eqref{eq:ref5.4} gives the central result:
\begin{equation}
    \frac{1}{\hbar}\nabla_\mathbf{k} E_n(\mathbf{k})
    \;=\; \int_\Omega
          \frac{\rho_{n\mathbf{k}}(\mathbf{r})\, 
          p_{R,n\mathbf{k}}(\mathbf{r})}{m}
          \,d\mathbf{r}.
    \label{eq:ref5.5}
\end{equation}
The imaginary component does not appear in Eq.~\eqref{eq:ref5.5} 
because, by Proposition~1, its primitive-cell integral vanishes 
identically. The group velocity is therefore a pure phase-gradient 
observable: it is the density-weighted spatial average of the local 
Bohmian velocity $v_R = p_R/m$, with no contribution from $p_I$. This 
confirms that the weak-momentum decomposition is fully consistent 
with the standard semiclassical theory of band 
transport~\cite{xiao2010}, while providing a transparent geometric 
interpretation: the group velocity measures transport along the 
phase direction of the wavefunction, and the amplitude direction is 
dynamically inert for this observable.

\section{Magnetic Fields and Gauge Invariance}
The nonlinear momentum decomposition is naturally compatible with the principle of gauge invariance under minimal coupling to a magnetic field. When a particle of charge $q$ is subject to a magnetic vector potential $\mathbf{A}$, the canonical momentum operator is replaced by the kinetic (mechanical) momentum operator $\hat{\Pi} = -i\hbar\nabla - q\mathbf{A}$.

\subsection{Gauge-Invariant Splitting}

Under minimal coupling, the kinetic momentum operator
$\hat{\Pi} = -i\hbar\nabla - q\mathbf{A}$ acts on $\psi$ as:
\begin{equation}
    \hat{\Pi}\psi 
    \;=\; \mathcal{P}_Q[\psi] - q\mathbf{A}\psi
    \;=\; \bigl(\mathcal{P}_R[\psi] - q\mathbf{A}\psi\bigr) 
          + i\,\mathcal{P}_I[\psi].
    \label{eq:gauge-split}
\end{equation}
Since $q\mathbf{A}$ is real, it shifts only the real transform.
We define the gauge-modified real transform and the 
corresponding local field:
\begin{equation*}
    \mathcal{P}_R^{\mathbf{A}}[\psi] 
    \;\equiv\; \mathcal{P}_R[\psi] - q\mathbf{A}\psi 
    \;=\; p_R^{\mathbf{A}}\psi,   
\end{equation*}
\begin{equation} 
    p_R^{\mathbf{A}} \;\equiv\; p_R - q\mathbf{A},
    \label{eq:PR-magnetic}
\end{equation}
while the imaginary transform $\mathcal{P}_I[\psi] = p_I\psi$ 
and its local field $p_I = \mathrm{Re}(-\hbar\nabla\ln\psi)$ 
are unchanged --- they depend only on $|\psi|^2 = \rho$, 
which is unaffected by a gauge transformation since 
$|e^{iq\Lambda/\hbar}\psi|^2 = |\psi|^2$.
Under the gauge transformation $\psi \to e^{iq\Lambda/\hbar}\psi$ 
and $\mathbf{A} \to \mathbf{A} + \nabla\Lambda$, the logarithmic 
gradient transforms as 
$\nabla\ln\psi \to \nabla\ln\psi + \frac{iq}{\hbar}\nabla\Lambda$.
Since $\frac{iq}{\hbar}\nabla\Lambda$ is purely imaginary, taking 
real and imaginary parts gives:
\begin{align}
    p_R &\;\to\; p_R + q\nabla\Lambda, \notag\\
    p_I &\;\to\; p_I. \notag
\end{align}
The combination $p_R^{\mathbf{A}} = p_R - q\mathbf{A}$ is therefore 
gauge-invariant: the shift $+q\nabla\Lambda$ in $p_R$ exactly 
cancels the shift $-q\nabla\Lambda$ from $-q\mathbf{A}$. The 
field $p_I$ is gauge-invariant because the gauge phase 
$e^{iq\Lambda/\hbar}$ contributes only to the imaginary part 
of $\nabla\ln\psi$, leaving 
$p_I = \mathrm{Re}(-\hbar\nabla\ln\psi)$ unchanged. This 
derivation uses only the definitions of $p_R$ and $p_I$ from 
Eqs.~\eqref{eq:ref2.1}--\eqref{eq:ref2.2} --- no polar form 
is required. The gauge-invariant physical velocity is 
$\mathbf{v} = p_R^{\mathbf{A}}/m$, while the imaginary component 
$p_I/m$ remains independent of gauge choice.

\subsection{Persistence of the Structural Identity}

Since $p_I$ is gauge-invariant, the structural identity
of Section~3 is unaffected by the magnetic field:
\begin{equation}
    \rho\, p_I = -\frac{\hbar}{2}\nabla\rho.
\end{equation}
The expectation value of the imaginary component therefore
reduces to the same boundary functional as in the field-free
case:
\begin{equation}
    \langle p_I \rangle_{\mathbf{A}} 
    = \int_{\Omega} \rho\, p_I \, d\mathbf{r} 
    = -\frac{\hbar}{2} \oint_{\partial\Omega} 
      \rho\, \hat{\mathbf{n}} \, dS = 0,
\label{eq:magnetic_boundary}
\end{equation}
where the last equality follows from the boundary conditions
of Section~5. The magnetic field only modifies $p_R^{\mathbf{A}}$;
the imaginary component $p_I$ contributes nothing to the
kinetic momentum expectation value, independent of the
magnetic flux or gauge choice.

\subsection{Geometric Interpretation}

The magnetic vector potential acts as a $U(1)$ connection. 
The real local field $p_R^{\mathbf{A}} = p_R - q\mathbf{A}$ 
corresponds to the covariant phase gradient determined by 
this connection, while the imaginary local field $p_I$ 
remains a divergence-generated amplitude field independent 
of the connection. Thus, the nonlinear factorization is 
fully compatible with gauge symmetry and preserves the 
geometric structure of the decomposition.

\subsection{Magnetic Energy Decomposition}
Applying the nonlinear decomposition to the magnetic Hamiltonian $\hat{H} = \frac{1}{2m}(\hat{\mathbf{p}} - q\mathbf{A})^2 + V$, the real mathematical operator $e_R$ (defined in the Appendix) yields:
\begin{equation}
    e_R = \frac{(p_R^{\mathbf{A}})^2}{2m} + V + \frac{\hbar \nabla \cdot p_I - p_I^2}{2m}.
\end{equation}
This confirms that the structural form of the quantum potential $V_{\mathrm{qu}}$ is preserved in the presence of magnetic fields, with $p_R$ replaced by the gauge-invariant field $p_R^{\mathbf{A}} = p_R - q\mathbf{A}$. The corresponding imaginary component maintains the continuity relation:
\begin{equation}
    e_I = \frac{\hbar}{2\rho} \left[ \frac{\partial \rho}{\partial t} + \nabla \cdot \left( \rho \frac{p_R^{\mathbf{A}}}{m} \right) \right].
\end{equation}
Since the term in the brackets is the magnetic continuity equation (which vanishes identically for any physical state), the expectation value $\langle e_I \rangle$ remains zero, extending the validity of the vanishing expectation theorem to gauge-coupled systems.

\section{Conclusion}

The results established in this work elevate the nonlinear momentum 
decomposition from a formal reformulation to a rigorous 
functional-analytic framework. By proving that the splitting of 
$-i\hbar\nabla$ into real and imaginary components is well-posed in 
$H^1(\Omega)$, spectrally preserving, and physically interpretable, we establish the decomposition as a legitimate geometric language for 
quantum observables, one that does not modify the spectral content of 
linear theory.

Four structural results follow. First, the vanishing of $\langle p_I 
\rangle$ is a geometric necessity: the imaginary component collapses 
into a divergence-generated boundary functional whose integral is zero 
for all states satisfying standard boundary conditions --- hard-wall 
constraints, decay at infinity, or lattice periodicity --- as 
established through the $W^{1,1}$ Sobolev regularity of $\rho = |\psi|^2$.
Second, the quantum potential $V_{\mathrm{qu}}$ of Eq.~\eqref{eq:ref4.2}
arises entirely from $p_I$, and its expectation value satisfies 
the exact identity $\langle V_{\mathrm{qu}}\rangle = 
\langle p_I^2\rangle/2m$, proved via Sobolev integration by 
parts without invoking the polar form and for all $H^1$ states 
including those with nodal sets. The known Fisher information 
identity $\langle V_{\mathrm{qu}}\rangle = 
\hbar^2\mathcal{I}[\rho]/8m$~\cite{Frieden1998,Reginatto1998} 
follows as a corollary. Third, the group velocity of 
a Bloch band is the density-weighted spatial average of the local 
Bohmian velocity $p_R/m$, with the imaginary component $p_I$ 
contributing nothing to band transport, in full consistency with 
semiclassical theory~\cite{xiao2010}. Fourth, under magnetic minimal 
coupling, the vector potential $\mathbf{A}$ acts exclusively on the 
phase component $p_R$, leaving $p_I$ as a purely geometric attribute 
of the amplitude, independent of gauge choice.

A further consequence is the natural resolution of nodal singularities. 
The $\rho = |\psi|^2$ weighting cancels the logarithmic divergence of $p_I$ 
at nodal sets $\{\psi=0\}$, rendering the imaginary momentum 
$L^1$-integrable and geometrically benign. This resolves a difficulty 
that has historically complicated trajectory-based interpretations of 
quantum mechanics without requiring any restriction to non-vanishing 
wavefunctions.

As a verification, for the infinite square well where 
$p_R = 0$ and $V = 0$, the identity 
$\langle V_{\mathrm{qu}}\rangle = \hbar^2\mathcal{I}[\rho]/8m$ 
gives $\langle E\rangle = \langle p_I^2\rangle/2m = E_n$ 
exactly. The present derivation makes transparent why: the 
energy quantization of the infinite square well has a direct 
information-geometric expression in terms of the Fisher 
information of the probability density, a perspective 
absent from the pointwise de~Broglie--Bohm formulation.

In the language of geometric quantum mechanics~\cite{Kibble,Ashtekar,
Brody}, $p_R$ governs transport --- Bohmian trajectories, probability 
current, and band velocity --- while $p_I$ governs amplitude deformation 
--- its energy cost and the quantum potential. The 
nonlinear factorization does not alter the spectral content of quantum 
mechanics. Instead, it provides an intrinsic, gauge-invariant language, 
independent of any particular wavefunction representation, that makes 
this geometric division of labor explicit, exposing a structure that is 
present in, but entirely obscured by, the standard linear formulation 
of the momentum operator.

\paragraph{Outlook.}
Equation~\eqref{eq:refA.22} has a natural extension beyond closed Hermitian systems. In a closed Hermitian system, $e_I(\mathbf{r},t)$ redistributes probability locally but its density-weighted integral vanishes --- as follows from Eq.~\eqref{eq:refA.norm} and the conservation of the global norm by unitarity --- so the global density is conserved. Locally, however, $e_I(\mathbf{r},t)$ governs the flow of probability density into or out of a volume element and need not vanish pointwise, acting as a local source-sink for the probability current. In open quantum systems described by effective non-Hermitian Hamiltonians~\cite{moiseyev2011,rotter2015}, even the density-weighted integral of $e_I$ need not vanish, and Eq.~\eqref{eq:refA.22} then predicts qualitatively distinct density dynamics: decay where $e_I < 0$ (probability absorption, resonance states with finite lifetimes) and growth where $e_I > 0$ (stimulated amplification, unstable resonance modes). A rigorous treatment would require specifying the non-Hermitian Hamiltonian explicitly --- for example, a complex optical potential $V(\mathbf{r}) = V_{\mathrm{re}}(\mathbf{r}) + iV_{\mathrm{im}}(\mathbf{r})$ with $V_{\mathrm{im}} < 0$ for absorbing regions~\cite{moiseyev2011} --- and deriving the corresponding $e_I$ from first principles within the nonlinear decomposition framework. We leave this extension to future work.

\section*{Acknowledgments}
AI-assisted tools were used for language editing 
and manuscript preparation.

\appendix
\section{Appendix}
\section*{Hydrodynamic Correspondence and Local Energy Decomposition}\label{sec:AppendixA}

\subsection{Expansion of the Mathematical Term $e_R(t)$}
Following Ref.~\cite{Khondker2025}, we introduce two mathematical operators, $e_R(\mathbf{r},t)$ and $e_I(\mathbf{r},t)$, respectively, as follows:
\begin{align}
e_R(\mathbf{r},t) \{\psi(\mathbf{r},t)\} & \equiv \Re e( i \hbar \frac{\partial}{\partial t} \ln \psi(\mathbf{r},t) ) \psi(\mathbf{r},t) \label{eq:refA.1}\\
e_I(\mathbf{r},t) \{\psi(\mathbf{r},t)\} & \equiv \Im m( i \hbar \frac{\partial}{\partial t} \ln \psi(\mathbf{r},t) ) \psi(\mathbf{r},t) \label{eq:refA.2}
\end{align}
It is easy to verify that
\begin{equation}
e_R \psi + i e_I \psi = i \hbar \frac{\partial }{\partial t} \psi = \hat{H}\psi \label{eq:refA.3}
\end{equation}

The evolution of the phase is governed by the real-valued term $e_R(t)$, defined via the local action of the Hamiltonian $\hat{H} = \frac{\mathcal{P}_Q^2}{2m} + V$:
\begin{equation}
    e_R(t) = \text{Re}\left( \frac{\hat{H}\psi}{\psi} \right) = \text{Re}\left( \frac{\mathcal{P}_Q^2 \psi}{2m\psi} \right) + V
\end{equation}
Evaluating the second-order operator $\mathcal{P}_Q^2$ by recursive application of the nonlinear maps:
\begin{align}
    \mathcal{P}_Q^2 \psi &= -i\hbar\nabla \cdot [(p_R + ip_I)\psi] \notag \\ 
    &= -i\hbar[(\nabla \cdot (p_R + i p_I))\psi + (p_R + ip_I) \cdot \nabla\psi]  
\end{align}
Substituting the identity $\nabla\psi = \frac{i}{\hbar}(p_R + ip_I)\psi$, the expansion becomes:
\begin{equation}
    \mathcal{P}_Q^2 \psi = -i\hbar(\nabla \cdot p_R + i\nabla \cdot p_I)\psi + (p_R + ip_I)^2 \psi
\end{equation}
Isolating the real part to define $e_R(t)$, we obtain:
\begin{equation}
    e_R(t)\psi  = \left[ \frac{p_R^2}{2m} + V + \frac{\hbar\,\nabla\cdot p_I \;-\; p_I^2}{2m}\right]\psi 
\end{equation}

Similarly, isolating the imaginary part of the same expansion 
yields the imaginary mathematical operator. Noting that 
$(p_R + ip_I)^2 = p_R^2 - p_I^2 + 2ip_R\cdot p_I$ and 
$-i\hbar(\nabla\cdot p_R + i\nabla\cdot p_I) = 
\hbar\nabla\cdot p_I - i\hbar\nabla\cdot p_R$, the imaginary 
part of $\mathcal{P}_Q^2\psi/(2m\psi)$ gives:
\begin{equation}
    e_I(t)\psi \;=\; \left[\frac{2p_R\cdot p_I 
    \;-\; \hbar\,\nabla\cdot p_R}{2m}\right]\psi.
    \label{eq:refA.eI}
\end{equation}

\subsection{Identification of the Quantum Potential}\label{sec:AppendixA.2}
Consistent with the quantum Hamilton-Jacobi equation, $e_R(t)$ is partitioned into the classical-like kinetic energy plus the potential energy $V$ and the structural quantum potential $V_{qu}$~\cite{Bohm1952}:
\begin{equation}
    e_R(t)\psi  = \left[\frac{p_R^2}{2m} + V + V_{qu} \right]\psi 
\end{equation}
Here we obtain the structural form of the quantum potential using only the imaginary local field $p_I$:
\begin{equation}
        V_{\mathrm{qu}} \;=\; \frac{\hbar\,\nabla\cdot p_I \;-\; p_I^2}{2m}.  \label{eq:refA.10}
\end{equation}
We use the ensemble average energy definition from~\cite{Khondker2025}:
\begin{equation}
\langle E \rangle_{\text{ensemble average}} = \int_{-\infty}^{\infty} d\mathbf{r}\, \psi^2 e_R(\mathbf{r}, t)
\end{equation}
Substituting our expression for $e_R$ yields the Hamilton-Jacobi equation:
\begin{equation}
\int_{-\infty}^{\infty} d\mathbf{r}\, \psi^2 e_R(t) = \int_{-\infty}^{\infty} d\mathbf{r}\, \psi^2 \left[ \frac{p_R^2}{2m} + V(\mathbf{r}) + \left( \frac{\hbar \nabla \cdot p_I - p_I^2}{2m} \right) \right] \label{eq:refA.12}
\end{equation}

The above equation corresponds to the reformulation of the de~Broglie--Bohm Hamilton-Jacobi equation~\cite{Bohm1952}. Using Eq.~\eqref{eq:ref2.6} one arrives at~\cite{Khondker2025}:
\begin{equation}
 \int_{-\infty}^{\infty} d\mathbf{r} \, \psi^2  \left[-\frac{\partial S}{\partial t} \right] =  \int d\mathbf{r} \, \psi^2 \left[\frac{(\nabla S)^2}{2m} + V(\mathbf{r}) - \frac{\hbar^2}{2m} \frac{\nabla^2 R}{R}\right] \label{eq:refA.13}
\end{equation}
Steps are shown below.


From Eq.~\eqref{eq:ref2.6}, the imaginary local field is:
\begin{equation}
    p_I \;=\; -\hbar\,\nabla\ln R \;=\; -\frac{\hbar\,\nabla R}{R}.
\label{eq:refA.14}   
\end{equation}
Its squared magnitude is therefore:
\begin{equation}
    p_I^2 \;=\; \frac{\hbar^2\,|\nabla R|^2}{R^2}.
\label{eq:refA.15}   
\end{equation}


Applying the divergence to Eq.~\eqref{eq:refA.14}:
\begin{equation}
    \nabla\cdot p_I
    \;=\; -\hbar\,\nabla\cdot\!\left(\frac{\nabla R}{R}\right)
    \;=\; -\hbar\left(\frac{\nabla^2 R}{R}
          - \frac{|\nabla R|^2}{R^2}\right).
    \label{eq:refA.16}   
\end{equation}
This uses the standard vector identity
$\nabla\cdot(f\mathbf{g}) = f\,\nabla\cdot\mathbf{g} + \mathbf{g}\cdot\nabla f$
with $f = R^{-1}$ and $\mathbf{g} = \nabla R$, giving
$\nabla\cdot({\nabla R}/{R}) = {\nabla^2 R}/{R} - {|\nabla R|^2}/{R^2}$.

Inserting Eqs.~\eqref{eq:refA.15} and \eqref{eq:refA.16} into the structural forms Eq.~\eqref{eq:ref4.2} or \eqref{eq:refA.10}:
\begin{align}
    V_{\mathrm{qu}}
    &= -\frac{\hbar^2}{2m}\frac{\nabla^2 R}{R}.
    \label{eq:refA.17}   
\end{align}
The $|\nabla R|^2/R^2$ terms cancel exactly, and we recover precisely
$V_{\mathrm{qu}}^{\mathrm{Bohm}}$. This confirms that the structural form Eq.~\eqref{eq:refA.10} is an \emph{exact algebraic identity} with Bohm's original expression.

\subsection{Spatial and time derivative of Density}
From the definition of the imaginary local field 
$p_I = \operatorname{Im}\!\left(\tfrac{\hbar}{i}\nabla\ln\psi\right) 
= \operatorname{Re}(-\hbar\nabla\ln\psi)$~\cite{Khondker2025,Takabayasi},
we decompose $\nabla\ln\psi$ into real and imaginary parts. Since 
$\psi = |\psi|e^{i\arg(\psi)}$, we have 
$\nabla\ln\psi = \nabla\ln|\psi| + i\,\nabla\arg(\psi)$, 
and therefore:
\begin{equation}
    \operatorname{Re}(\nabla\ln\psi) 
    = \nabla\ln|\psi| 
    = \nabla\ln\sqrt{\rho} 
    = \frac{\nabla\rho}{2\rho}.
    \label{eq:refA.density-log}
\end{equation}
Multiplying by $-\hbar$ gives $p_I = -\frac{\hbar}{2}\frac{\nabla\rho}{\rho}$.
This step uses only $\rho = |\psi|^2$ and the definition of $p_I$ 
--- no polar form of $\psi$ is required.
Rearranging gives the fundamental geometric identity:
\begin{equation}
    \nabla \rho = -\frac{2}{\hbar} \rho p_I. \label{eq:refA.density-id}
\end{equation}
This identity confirms that $p_I$ is the logarithmic gradient 
of the probability density, characterizing the direction of 
steepest change in amplitude.

Equation~\eqref{eq:refA.density-id} shows that $p_I$ is a pure divergence term when weighted by the density: $\rho p_I = -\frac{\hbar}{2}\nabla\rho$. Since the integral of a pure gradient vanishes for all physically admissible states (as established in Section~3), this identity explains structurally why the imaginary component of the momentum does not contribute to the global expectation value, even though it is essential for the local structure of the quantum potential that appears in the equations of Bohm~\cite{Bohm1952} and Takabayasi~\cite{Takabayasi}.


Starting from $\rho = \psi^*\psi$ and applying the product rule:
\begin{equation}
    \frac{\partial\rho}{\partial t}
    \;=\; \dot{\psi}^*\psi + \psi^*\dot{\psi}.
    \label{eq:refA.20}  
\end{equation}
From Eq.~\eqref{eq:refA.3}, the Schr\"odinger equation gives 
$i\hbar\dot{\psi} = (e_R + ie_I)\psi$, so:
\begin{equation}
    \dot{\psi} \;=\; \frac{e_R + ie_I}{i\hbar}\,\psi,
    \qquad
    \dot{\psi}^* \;=\; \frac{e_R - ie_I}{-i\hbar}\,\psi^*,
    \label{eq:refA.21} 
\end{equation}
where the second expression follows by complex conjugation, 
using the fact that $e_R$ and $e_I$ are real. Substituting 
into Eq.~\eqref{eq:refA.20}:
\begin{equation}
    \frac{\partial\rho}{\partial t}
    \;=\; \frac{\rho}{i\hbar}
          \left[-(e_R - ie_I) + (e_R + ie_I)\right]
    \;=\; \frac{2e_I(t)}{\hbar}\,\rho.
    \label{eq:refA.22} 
\end{equation}
This derivation uses only the definition $\rho = \psi^*\psi$ 
and Eq.~\eqref{eq:refA.3} --- no polar form of $\psi$ is 
required. It shows that $e_I(\mathbf{r},t)$ governs the 
local rate of change of probability density at each point --- 
it acts as a local source-sink, redistributing probability 
between volume elements through the osmotic flow. For any 
physical state governed by a Hermitian Hamiltonian, $e_I$ 
need not vanish pointwise. However, integrating 
Eq.~\eqref{eq:refA.22} over all space gives:
\begin{equation}
    \frac{d}{dt}\int_\Omega\rho\,d\mathbf{r} 
    \;=\; \frac{2}{\hbar}\int_\Omega \rho\, e_I\,d\mathbf{r}.
    \label{eq:refA.norm}
\end{equation}
For a closed Hermitian system, the global density is conserved 
by unitarity: $\frac{d}{dt}\int_\Omega\rho\,d\mathbf{r} = 0$. 
Therefore $\int_\Omega\rho\, e_I\,d\mathbf{r} = 0$, consistent 
with the vanishing of the density-weighted average of $e_I$ 
for all physically admissible states.

The imaginary component of the local Hamiltonian action, denoted as $e_I(t)$, determines the time evolution of the density $\rho = \psi^*\psi$. Substituting Eq.~\eqref{eq:refA.eI} into Eq.~\eqref{eq:refA.22}:
\begin{equation}
    \frac{\partial \rho}{\partial t} = \frac{2\rho}{\hbar} e_I(t) = \frac{\rho}{m\hbar} \left( 2p_R \cdot p_I - \hbar \nabla \cdot p_R \right)
    \label{eq:refA.cont1}
\end{equation}
Applying the product rule and substituting 
$\nabla\rho = -\frac{2}{\hbar}\rho\,p_I$ from 
Eq.~\eqref{eq:refA.density-id}:
\begin{align}
    \nabla\cdot\!\left(\rho\frac{p_R}{m}\right)
    &= \frac{1}{m}\left(\rho\,\nabla\cdot p_R 
       + p_R\cdot\nabla\rho\right) \notag\\
    &= \frac{1}{m}\left(\rho\,\nabla\cdot p_R 
       - \frac{2\rho}{\hbar}\,p_R\cdot p_I\right) \notag\\
    &= \frac{\rho}{m\hbar}
       \left(\hbar\,\nabla\cdot p_R 
       - 2p_R\cdot p_I\right).
    \label{eq:refA.divJ}
\end{align}
This confirms that the time-dependent density is conserved purely through transport driven by the real local field $p_R$:
\begin{equation}
    \frac{\partial \rho}{\partial t} + \nabla \cdot \left( \rho \frac{p_R}{m} \right) = 0
\end{equation}

\bibliographystyle{unsrt}
\bibliography{weak_momentum_arxiv}

\end{document}